\begin{document}


\setcounter{page}{219}
\publyear{2021}
\papernumber{2071}
\volume{182}
\issue{3}

 \finalVersionForARXIV


\title{Exact and Approximate Algorithms for \\  Computing Betweenness Centrality in Directed Graphs}

\author{Mostafa Haghir Chehreghani\thanks{Address  for correspondence: Department of Computer
      Engineering, Amirkabir University of Technology (Tehran Polytechnic), No. 350, Hafez Ave., Valiasr Square,
            Tehran, Iran.\newline \newline
          \vspace*{-6mm}{\scriptsize{Received  September 2020; \ revised September 2021.}}}
          \\
Department of Computer Engineering \\
Amirkabir University of Technology (Tehran Polytechnic), Iran \\
mostafa.chehreghani@aut.ac.ir
\and Albert Bifet, Talel Abdessalem\\
LTCI, T\'el\'ecom-Paris \\
IP-Paris, France \\
\{albert.bifet, talel.abdessalem\}@telecom-paristech.fr
}

\maketitle

\runninghead{M.H. Chehreghani et al.}{Computing Betweenness Centrality in Directed Graphs}

\vspace*{-5mm}
\begin{abstract}
Graphs (networks) are an important tool to model data in different domains.
Real-world graphs are usually {\em directed},
where the edges have a direction and they are not symmetric.
{\em Betweenness centrality} is an important index
widely used to analyze networks.
In this paper,
first given a directed network $G$ and a vertex $r \in V(G)$,
we propose an exact algorithm to compute betweenness score of $r$.
Our algorithm pre-computes a set $\mathcal{RV}(r)$,
which is used to prune a huge amount of computations that do not contribute to the betweenness score of $r$.
Time complexity of our algorithm depends on $|\mathcal{RV}(r)|$ and it is
respectively
$\Theta(|\mathcal{RV}(r)|\cdot|E(G)|)$ and
$\Theta(|\mathcal{RV}(r)|\cdot|E(G)|+|\mathcal{RV}(r)|\cdot|V(G)|\log |V(G)|)$
for unweighted graphs and weighted graphs with positive weights.
$|\mathcal{RV}(r)|$ is bounded from above by $|V(G)|-1$ and in most cases,
it is a small constant.
Then, for the cases where $\mathcal{RV}(r)$ is large,
we present a simple randomized algorithm
that samples from $\mathcal{RV}(r)$ and performs computations for only
the sampled elements.
We show that this algorithm provides an $(\epsilon,\delta)$-approximation to the betweenness score of $r$.
Finally, we perform extensive experiments over several real-world datasets from different domains
for several randomly chosen vertices as well as for the vertices
with the highest betweenness scores.
Our experiments reveal that for estimating betweenness score of a single vertex,
our algorithm significantly outperforms
the most efficient existing randomized algorithms, in terms of both running time and accuracy.
Our experiments also reveal that
our algorithm
improves the existing algorithms when someone
is interested in computing betweenness values of the vertices in a set
whose cardinality is very small.
\end{abstract}

\begin{keywords}
Social networks,
directed graphs,
betweenness centrality,
exact algorithm,
approximate algorithm
\end{keywords}

\section{Introduction}
\label{sec:introduction}

Graphs (networks) provide an important tool to model data in different domains,
including social networks, bioinformatics, road networks, the world wide web and communication systems.
A property seen in most of these real-world networks
is that the links between vertices do not always represent reciprocal relations \cite{Newman03thestructure}.
As a result, the networks formed in these domains are {\em directed graphs}
where any edge has a direction and the edges are not always symmetric.

{\em Centrality} is a structural property of vertices (or edges) in the network
that quantifies their relative importance.
For example, it determines the importance of a person within a social network,
or a road within a road network.
Freeman \cite{jrnl:Freeman} introduced and defined
{\em betweenness centrality} of a vertex
as the number of shortest paths
from all (source) vertices to all others that pass through that vertex.
He used it for measuring the control of a human over
the communications among others in a social network \cite{jrnl:Freeman}.
Betweenness centrality is also used in some well-known algorithms for
clustering and community detection in social and information networks \cite{jrnl:Girvan}.

Although there exist polynomial time and space algorithms for betweenness centrality computation,
the algorithms are expensive in practice.
Currently, the most efficient existing exact method is Brandes's algorithm \cite{jrnl:Brandes}.
Time complexity of this algorithm is $\Theta(|V(G)|\cdot|E(G)|)$ for unweighted graphs and
$\Theta(|V(G)|\cdot|E(G)| + |V(G)|^2 \log |V(G)|)$ for weighted graphs with positive weights.
This means this algorithm is not
applicable, even for mid-size networks.

However, there are observations that may improve computation of betweenness centrality
in practice.
In several applications it is sufficient to
compute betweenness score of only one or a few vertices.
For instance,
the
index might be computed only for core vertices of communities in social/information
networks \cite{jrnl:Wang}
or only for hubs in communication networks.
Another example, discussed in \cite{DBLP:conf/complenet/AgarwalSCI15,DBLP:journals/corr/AgarwalSCI14},
is handling cascading failures.
It has been shown that the failure of a vertex with a higher
betweenness score usually causes
a greater collapse of the network
\cite{Stergiopoulos201534}.
Therefore, failed vertices should be recovered in the order of their betweenness scores.
This means it is required to compute betweenness scores of only failed vertices,
that usually form a very small subset of all vertices.
Another example,
discussed in \cite{DBLP:journals/cj/Chehreghani14},
is a road network wherein it is required to compute
betweenness score of a single vertex (intersection)
in different configurations, to see which one is better
in reducing the traffic jam of the intersection.
The other example is in a transportation network.
It is shown that in a transportation network, betweenness
centrality is positively related to the efficiency
of an airport \cite{RePEc:pra:mprapa:17673}.
Hence and as suggested in \cite{DBLP:journals/jea/BergaminiCDMSV18},
when betweenness score of a given (specific) airport node is not large enough,
it should be increased by e.g., adding new edges to the network.
To do so, we need to quickly and precisely estimate
betweenness score of the airport node.
Note that in these applications,
the target vertices are not necessarily
those that have the highest betweenness scores.
Hence, algorithms that identify vertices
with the highest betweenness scores \cite{Riondato2016} are not applicable.
Note also that
it is a famous conjuncture in graph theory whether
betweenness centrality of a single vertex can be computed more efficient than all vertices.

In the current paper, we exploit this observation to design more effective exact and approximate algorithms
for computing betweenness centrality
of a single node or a small set of nodes in a large directed graph.
Our algorithms are based on computing the set of {\em reachable vertices} for a given vertex $r$.
On the one hand, this set can be computed very efficiently.
On the other hand, it indicates the potential source vertices whose contributions (dependency scores) on $r$ are non-zero.
As a result, it helps us to avoid a huge amount of computations that do not contribute to the
betweenness score of $r$.

\medskip
In this paper, our key contributions are as follows.
\begin{itemize}
\item
Given a directed graph $G$ and a vertex $r \in V(G)$,
we present an efficient exact algorithm to compute betweenness score of $r$.
The algorithm is based on pre-computing the set of {\em reachable vertices} of $r$,
denoted by $\mathcal{RV}(r)$.
$\mathcal{RV}(r)$ can be computed in $\Theta(|E(G)|)$ times for both
unweighted graphs and weighted graphs with positive weights.
Time complexity of the whole exact algorithm depends on the size of $\mathcal{RV}(r)$
and it is respectively
$\Theta(|\mathcal{RV}(r)|\cdot|E(G)|)$ and
$\Theta(|\mathcal{RV}(r)|\cdot|E(G)|+|\mathcal{RV}(r)|\cdot|V(G)|\log |V(G)|)$
for unweighted graphs and weighted graphs with positive weights.
$|\mathcal{RV}(r)|$ is bounded from above by $|V(G)|$ and in most cases,
it can be considered as a small constant (see Section~\ref{sec:experimentalresults}).
Hence, in many cases, time complexity of our proposed exact algorithm for unweighted graphs
is linear, in terms of $|E(G)|$, and it is $\Theta(|E(G)|+|V(G)|\log |V(G)|)$
for weighted graphs with positive weights.

\item
In the cases where $\mathcal{RV}(r)$ is large,
our exact algorithm might be intractable in practice.
To address this issue, we present a simple randomized algorithm
that samples elements from $\mathcal{RV}(r)$ and performs computations for only
the sampled elements.
We show that this algorithm provides an $(\epsilon,\delta)$-approximation to the betweenness score of $r$.

\item
In order to evaluate the empirical efficiency of our proposed algorithms,
we perform extensive experiments over several real-world datasets from different domains.
In our experiments, we introduce a procedure that first
computes $\mathcal{RV}(r)$.
Then if the size of $\mathcal{RV}(r)$
is less than some threshold (e.g., $1000$),
it employs the exact algorithm.
Otherwise, it exploits the randomized algorithm.
We evaluate this procedure for several randomly chosen vertices as well as for the vertices
with the highest betweenness scores.
We show that for randomly chosen vertices,
our proposed procedure always significantly outperforms
the most efficient existing randomized algorithms, in terms of both running time and accuracy.
Furthermore, for the vertices that have the highest betweenness scores,
over most of the datasets our algorithm outperforms
most efficient existing algorithms.

\item
While our algorithm is intuitively designed to estimate
betweenness score of only one vertex,
in our experiments we consider the cases wherein betweenness scores of small sets of vertices are computed.
Our experiments reveal that in such cases,
our proposed algorithm efficiently computes betweenness scores of
all vertices in  sets of sizes 5, 10 and 15 and it considerably outperforms the existing algorithms.
\end{itemize}

A preliminary version of this paper
was presented in {\em Proceedings of the
22nd Pacific-Asia Conference on Knowledge Discovery and Data Mining} (PAKDD 2018), pp. 752-764 \cite{DBLP:journals/corr/abs-1708-08739}.
The current paper extends it by a full elaboration
of proofs and theoretical discussions, as well as
a significantly more extensive experimental evaluation.

The rest of this paper is organized as follows.
In Section~\ref{sec:preliminaries},
preliminaries and necessary definitions related to betweenness centrality are introduced.
A brief overview on related work is given in Section~\ref{sec:relatedwork}.
In Section~\ref{sec:betweennessdirected}, we
present our exact and approximate algorithms and their analysis.
In Section~\ref{sec:experimentalresults}, we empirically evaluate our proposed algorithm
and show its high efficiency and accuracy, compared to existing algorithms.
Finally, the paper is concluded in Section~\ref{sec:conclusion}.

\section{Preliminaries}
\label{sec:preliminaries}

In this section, we present definitions and notations widely used in the paper.
We assume that the reader is familiar with basic concepts in graph theory.
Throughout the paper, $G$ refers to a graph (network).
For simplicity, we assume that $G$ is a directed, connected and loop-free graph without multi-edges.
Throughout the paper, we assume that $G$ is an unweighted graph,
unless it is explicitly mentioned that $G$ is weighted.
$V(G)$ and $E(G)$ refer to the set of vertices and the set of edges of $G$, respectively.
For a vertex $v \in V(G)$, the number of head ends adjacent to $v$
is called its {\em in degree}, and
the number of tail ends adjacent to $v$ is called its {\em out degree}.

\medskip
A \textit{shortest path}
from $u \in V(G)$ to $v \in V(G)$ is a path
whose length is minimum, among all paths from $u$ to $v$.
For two vertices $u,v \in V(G)$, if $G$ is unweighted,
by $d(u,v)$ we denote the length (the number of edges) of a shortest path connecting $u$ to $v$.
If $G$ is weighted,
$d(u,v)$ denotes the sum of the weights of the edges of a shortest path connecting $u$ to $v$.
By definition, $d(u,u)=0$.
Note that in directed graphs, $d(u,v)$ is not necessarily equal to $d(v,u)$.
For $s,t \in V(G)$, $\sigma_{st}$ denotes the number of shortest paths between $s$ and $t$, and
$\sigma_{st}(v)$ denotes the number of shortest paths between $s$ and $t$ that also pass through $v$.
{\em Betweenness centrality} of a vertex $v$ is defined as:
\begin{equation}
BC(v)= \sum_{s,t \in V(G) \setminus \{v\}} \frac{\sigma_{st}(v)}{\sigma_{st}}.
\end{equation}

A notion which is widely used for counting the number of shortest paths in a graph is the directed acyclic graph (DAG)
containing all shortest paths starting from a vertex $s$ (see e.g., \cite{jrnl:Brandes}).
In this paper, we refer to it as the \textit{shortest-path-DAG}, or \textit{SPD} in short, rooted at $s$.
For every vertex $s$ in graph $G$,
the \textit{SPD} rooted at $s$ is unique,
and it can be computed in $\Theta(|E(G)|)$ time for unweighted graphs
and in $\Theta\left(|E(G)|+|V(G)|\text{ log }|V(G)|\right)$ time
for weighted graphs with positive weights \cite{jrnl:Brandes}.

\medskip
Brandes \cite{jrnl:Brandes} introduced the notion of the \textit{dependency score}
of a vertex $s \in V(G)$ on a vertex $v \in V(G) \setminus \{s\}$, which is defined as:
\begin{equation}
\delta_{s\bullet}(v)=\sum_{t \in V(G) \setminus \{v,s\}} \delta_{st}(v)
\end{equation}
where
$\delta_{st}(v) = \frac {\sigma_{st}(v)}{\sigma_{st}}.$
We have:
\begin{equation}
BC(v)= \sum_{s \in V(G) \setminus \{v\}} \delta_{s\bullet}(v).
\end{equation}

Brandes \cite{jrnl:Brandes} showed that dependency scores of a
source vertex on different vertices in the network can be computed using a recursive relation,
defined as the following:
\begin{equation}
\delta_{s\bullet}(v)=\sum_{w:v \in P_s(w)} \frac{\sigma_{sv}}{\sigma_{sw}}(1+\delta_{s\bullet}(w)),
\label{eq:recursive}
\end{equation}
where
$P_s(w)$ contains the predecessors of $w$ in the SPD rooted at $s$.


 \section{Related work}
\label{sec:relatedwork}

Brandes~\cite{jrnl:Brandes} introduced an efficient algorithm
for computing betweenness centrality of a vertex,
which is performed in
$\Theta( |V(G)| |E(G)| )$ and $\Theta(|V(G)| |E(G)| + |V(G)|^2 \log |V(G)|)$
times for unweighted and weighted networks with positive weights, respectively.
{\c{C}}ataly{\"{u}}rek et.al. \cite{DBLP:conf/sdm/CatalyurekKSS13} presented the
{\em compression} and {\em shattering} techniques to improve
the efficiency of Brandes's algorithm
for large graphs. During {\em compression}, vertices with known betweenness scores are removed from the graph and
during {\em shattering}, the graph is partitioned into smaller components.
Holme \cite{jrnl:Holme} showed that betweenness centrality of a vertex is
highly correlated with the fraction of time that the vertex is occupied
by the traffic of the network.
Barthelemy \cite{jrnl:Barthelemy} showed that many scale-free networks \cite{jrnl:Barabasi}
have a power-law distribution of betweenness centrality.
Furno et.al.~\cite{DBLP:conf/bigdataconf/FurnoFSZ17}
reduced the number of shortest-path-DAGs by using, as sources, pivot nodes identified through the exploitation of topological properties of graphs revealed by using clustering.
They empirically evaluated their algorithm
over a real-world road network and showed that
the approximation error does not significantly affect the most critical vertices.

\subsection{Generalization to sets}
Everett and Borgatti \cite{jrnl:Everett} defined \textit{group betweenness centrality}
as a natural extension of betweenness centrality for sets of vertices.
Group betweenness centrality of a set is defined as the
number of shortest paths passing through at least one of the vertices in the set \cite{jrnl:Everett}.
The other natural extension of betweenness centrality is \textit{co-betweenness centrality}.
Co-betweenness centrality is defined as the number of shortest paths passing through all vertices in the set.
Kolaczyk et.al. \cite{jrnl:Kolaczyk} presented an $\Theta(|V(G)|^3)$ time algorithm for
co-betweenness centrality computation of sets of size 2.
Chehreghani \cite{conf:cbcwsdm} presented efficient algorithms for co-betweenness centrality computation
of any set or sequence of vertices in weighted and unweighted graphs.
Puzis et.al. \cite{jrnl:PuzisPhysRev} proposed an $\Theta(|K|^3)$ time algorithm for
computing successive group betweenness centrality,
where $|K|$ is the size of the set.
The same authors in \cite{jrnl:PuzisAIComm} presented two algorithms
for finding \textit{most prominent group}.
A \textit{most prominent group} of a network is a set of vertices of minimum size,
so that every shortest path in the network passes through at least one of the vertices in the set.
The first algorithm is based on a heuristic search and
the second one is based on iterative greedy choice of vertices.
Chehreghani et.al.~\cite{DBLP:conf/bigdataconf/ChehreghaniBA18} compared different sampling algorithms for estimating group betweenness centrality.
More than the standard techniques presented in the literature, they investigated a method which is based on the distance between a single vertex and a set of vertices.

\subsection{Approximate algorithms}

Brandes and Pich
\cite{jrnl:Brandes3} proposed an approximate algorithm based on
selecting $k$ source vertices and
computing dependency scores of them on the other vertices in the graph.
They used various strategies for selecting the source vertices, including:
MaxMin, MaxSum and MinSum \cite{jrnl:Brandes3}.
In the method of \cite{proc:Bader}, some source vertices are selected uniformly at random,
and their dependency scores are computed and scaled for all vertices.
Geisberger et.al. \cite{conf:Geisberger} presented an algorithm
for approximate ranking of vertices based on their betweenness scores.
In this algorithm, the method for aggregating dependency
scores changes so that vertices do not profit from being near the selected source vertices.
Chehreghani~\cite{DBLP:journals/cj/Chehreghani14}
proposed a randomized framework for
unbiased estimation of the betweenness score of a single vertex.
Then,
to estimate betweenness score of vertex $v$,
he proposed a non-uniform sampler, defined as follows:
\[\mathbb P[s] = \frac{\frac{1}{d(v,s)}}{\sum_{u \in V(G)\setminus \{v\}} \frac{1}{d(v,u)}},\]
where $s \in V(G) \setminus \{v\}$.

\medskip
Riondato and Kornaropoulos \cite{Riondato2016}
presented shortest path samplers for estimating betweenness centrality of
all vertices or the $k$ vertices
that have the highest betweenness scores in a graph.
They determined the number of samples
needed to approximate the betweenness with the desired accuracy
and confidence by means of the VC-dimension theory \cite{vc-ucrfep-71}.
Recently, Riondato and Upfal \cite{RiondatoKDD20116} introduced algorithms
for estimating betweenness scores of all vertices in a graph.
They also discussed a variant of the algorithm that finds the top-$k$ vertices.
They used Rademacher average \cite{Shalev-Shwartz:2014:UML:2621980} to determine the number
of required samples.
Borassi and Natale \cite{DBLP:conf/esa/BorassiN16} presented the KADABRA algorithm,
which uses balanced bidirectional BFS (bb-BFS) to sample shortest paths.
In bb-BFS, a BFS is performed from each of the two endpoints $s$ and $t$,
in such a way that they
explore
almost the same number of edges.
The authors of~\cite{DBLP:conf/edbt/ChehreghaniAB19} investigated using the Metropolis-Hastings technique
to sample from the optimal distribution
presented in \cite{DBLP:journals/cj/Chehreghani14} for betweenness centrality estimation.

\subsection{Dynamic graphs}
Lee et.al.~\cite{proc:www}
proposed an algorithm to efficiently update betweenness centrality of
vertices when the graph obtains a new edge.
They reduced the search space by finding a candidate set of
vertices whose betweenness scores
can be updated.
Bergamini et.al. \cite{DBLP:conf/alenex/BergaminiMS15} presented approximate algorithms
that update betweenness scores
of all vertices when an edge is inserted or
the weight of an edge decreases.
They used the algorithm of \cite{Riondato2016} as the building block.
Hayashi et.al. \cite{DBLP:journals/pvldb/HayashiAY15}
proposed a fully dynamic algorithm for estimating betweenness centrality of all vertices in a large
dynamic network.
Their algorithm is based on two data structures: {\em hypergraph sketch} that
keeps track of SPDs,
and {\em two-ball index} that helps
to identify the parts of hypergraph sketches that require updates.
An overview on dynamical algorithms for updating  betweenness centrality in dynamic graphs can be found in \cite{https://doi.org/10.1002/widm.1393}.

\section{Computing betweenness centrality in directed graphs}
\label{sec:betweennessdirected}

In this section,
we present our exact and approximate algorithms
for computing betweenness centrality of a given vertex $v$ in a large directed graph.
First in Section~\ref{sec:reachable}, we introduce {\em reachable vertices} and show that they are sufficient
to compute the betweenness score of $v$.
Then in Sections~\ref{sec:exact} and \ref{sec:inexact},
we respectively present our exact and approximate algorithms.

\subsection{Reachable vertices}
\label{sec:reachable}

Let $G$ be a directed graph and $r \in V(G)$.
Suppose that we want to compute betweenness score of $r$.
To do so, as Brandes algorithm \cite{jrnl:Brandes} suggests,
for each vertex $s \in V(G)$, we may form the SPD rooted at $s$ and compute
the dependency score of $s$ on $r$.
Betweenness score of $r$ will be the sum of all the dependency scores.
However, it is possible that in a directed graph and for many vertices $s$,
there is no path from $s$ to $r$ and as a result, dependency score of $s$ on $r$ is 0.
An example of this situation is depicted in Figure~\ref{fig:rv1}.
In the graph of this figure, suppose that we want to compute betweenness score of vertex $r_1$.
If we form the SPD rooted at $v_1$, after visiting the parts of the graph indicated by hachures,
we find out that there is no shortest path from $v_1$ to $r_1$ and hence,
$\delta_{v_1\bullet}(r_1)$ is 0.
The same holds for all vertices in the hachured part of the graph, i.e.,
dependency scores of these vertices on $r_1$ are 0.
The question arising here is that whether there exists an efficient way to detect
the vertices whose dependency scores on $r$ are 0
(so that we can avoid forming SPDs rooted at them)?
In the rest of this section,
we aim to answer this question.
We first introduce a usually small subset of vertices,
called {\em reachable vertices} and denoted with $RV(r)$,
that are sufficient to compute betweenness score of~$r$.
Then, we discuss how this set can be computed efficiently.

\begin{figure}[h]
\centering
\subfigure[]
{
\includegraphics[scale=0.42]{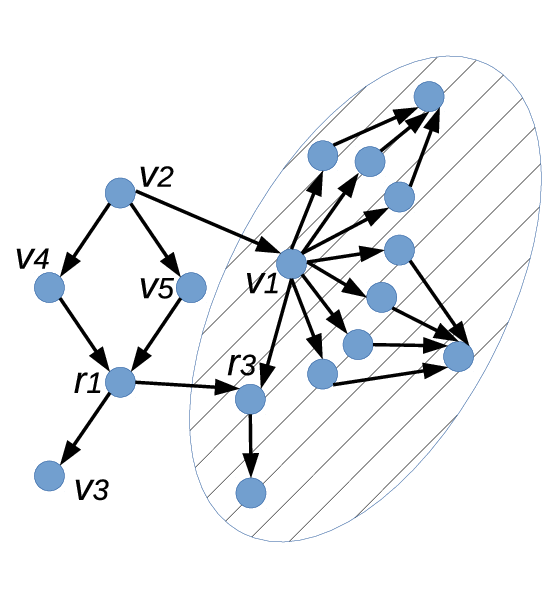}
\label{fig:rv1}
}\qquad\quad
\subfigure[]
{
\includegraphics[scale=0.42]{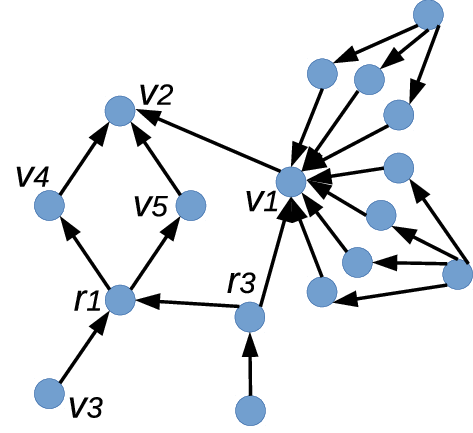}
\label{fig:rv2}
}
\subfigure[]
{
\hspace*{5mm}\includegraphics[scale=0.42]{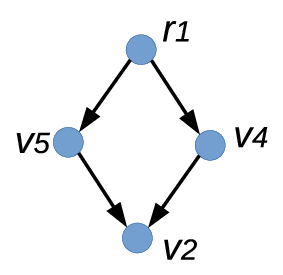}
\label{fig:rv3}
}\vspace*{-2mm}
\caption
{
\label{fig:rv}
In Figure~\ref{fig:rv1},
the dependency scores of the vertices in the hachured part of the graph (and also $v_3$) on $r_1$ is 0.
The graph of Figure~\ref{fig:rv2} presents the reverse graph of the graph of Figure~\ref{fig:rv1}.
Figure~\ref{fig:rv3} shows how $\mathcal{RV}(r_1)$ is computed.
}
\end{figure}

\begin{definition}
Let $G$ be a directed graph and $r,v \in V(G)$.
We say $r$ is {\em reachable} from $v$ if there is a (directed) path from $v$ to $r$.
The set of vertices that $r$ is reachable from them is denoted by $RV(r)$.
\end{definition}

\begin{proposition}
\label{proposition:reachable}
Let $G$ be a directed graph and $r \in V(G)$.
If {\em out degree} of $r$ is $0$, $BC(r)$ is $0$, too.
Otherwise, we have:
\begin{equation}
\label{eq:rv}
BC(r)=\sum_{v\in RV(r)} \delta_{v\bullet}(r).
\end{equation}
\end{proposition}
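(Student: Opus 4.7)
The plan is to unfold the definition of $BC(r)$ into a sum of dependency scores and then show that every term indexed by a vertex outside $\mathcal{RV}(r)$ vanishes, so the full sum collapses to the sum over $\mathcal{RV}(r)$. Recall from the preliminaries that
\[
BC(r) = \sum_{v \in V(G) \setminus \{r\}} \delta_{v\bullet}(r), \qquad \delta_{v\bullet}(r) = \sum_{t \in V(G) \setminus \{v,r\}} \frac{\sigma_{vt}(r)}{\sigma_{vt}}.
\]
So the whole proof amounts to identifying the $v$'s (and, in the degenerate case, the $t$'s) for which $\sigma_{vt}(r)=0$ is forced by the topology of $G$.

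For the first claim, suppose the out-degree of $r$ is $0$. Any directed path that has $r$ as an internal vertex must leave $r$ along an outgoing edge, which does not exist. Hence for every ordered pair $(s,t)$ with $s,t\neq r$, $\sigma_{st}(r)=0$, and therefore $BC(r)=0$. This handles the first bullet and also motivates the restriction to the "reachable" direction in the second.

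For the main identity, I would split the outer sum as
\[
BC(r) = \sum_{v \in RV(r) \setminus \{r\}} \delta_{v\bullet}(r) \;+\; \sum_{v \in V(G) \setminus (RV(r) \cup \{r\})} \delta_{v\bullet}(r)
\]
and show that the second summand is zero term by term. Fix $v \notin RV(r)$; by definition there is no directed path from $v$ to $r$ in $G$. Now any shortest path from $v$ to some $t$ that goes through $r$ would in particular induce a directed $v$-to-$r$ path (its prefix up to the first occurrence of $r$), contradicting $v \notin RV(r)$. Consequently $\sigma_{vt}(r)=0$ for every $t$, so $\delta_{v\bullet}(r)=0$. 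Summing over such $v$ gives the claimed identity (using that $\delta_{r\bullet}(r)$ is excluded from $BC(r)$, so it is harmless whether or not the convention puts $r$ into $\mathcal{RV}(r)$).

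There is no real obstacle; the only subtle point worth stating carefully is the prefix argument that turns "$r$ lies on a shortest $v$-to-$t$ path" into "$v$ reaches $r$", and the bookkeeping around excluding $r$ itself from the sums so that the boundary case $v=r$ does not sneak back into the identity.
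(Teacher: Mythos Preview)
Your proof is correct and follows essentially the same approach as the paper: both argue that vertices outside $RV(r)$ contribute zero dependency on $r$ because a shortest path through $r$ would yield a directed path from the source to $r$, contradicting non-reachability. Your version is more explicit about the prefix argument and the bookkeeping around $r$ itself, but the underlying idea is identical to the paper's brief proof.
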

\begin{proof}
If {\em out degree} of $r$ is $0$, there is no shortest path in the graph that leaves $r$, as a result,
$BC(r)$ is $0$.
To prove that Equation~\ref{eq:rv} holds, we need to prove that for any $w\in V(G)\setminus RV(r)$,
dependency score of $w$ on $r$ is $0$.
Obviously, this holds, because there is no path from $w$ to $r$ and as a result,
no shortest path starting from $w$ can pass over $r$.
\end{proof}

Proposition~\ref{proposition:reachable} suggests that for computing betweenness score of $r$,
we first check whether {\em out degree} of $r$ is greater than $0$ and if so,
we compute $RV(r)$.
Betweenness score of $r$ is exactly computed
using Equation~\ref{eq:rv}.

\medskip
If $RV(r)$ is already known, this procedure can significantly improve
computation of betweenness centrality of $r$.
The reason is that, as our experiments show,
in real-world directed networks $RV(r)$ is usually significantly smaller than $V(G)$.
However, computing $RV(r)$ can be computationally expensive as in the worst case,
it requires the same amount of time as computing betweenness score of $r$.
This motivates us to try to define a set $\mathcal{RV}(r)$ that satisfies the following properties:
(i) $ RV(r) \subseteq \mathcal{RV}(r)$ and
(ii) $\mathcal{RV}(r)$ can be computed effectively in a time much faster than computing $BC(r)$.
Condition (i) implies that each vertex $v \in V(G)$ whose dependency score on $r$ is greater than $0$,
belongs to $\mathcal{RV}(r)$ and as a result,
$BC(r) = \sum_{v \in \mathcal{RV}(r)} \delta_{v\bullet}(r).$
In the following,
we present a definition of $\mathcal{RV}(r)$ and a simple and efficient algorithm to compute it.

\begin{definition}
Let $G$ be a directed graph.
{\em Reverse graph} of $G$, denoted by $R(G)$,
is a directed graph such that:
(i) $V(R(G))=V(G)$, and
(ii) $(u,v) \in E(R(G))$ if and only if $(v,u) \in E(G)$.
\end{definition}

For example,
the graph of Figure~\ref{fig:rv2} presents the {\em reverse graph} of the graph of Figure~\ref{fig:rv1}.

\begin{definition}
\label{def:mathcalrv}
Let $G$ be a directed graph and $r \in V(G)$.
We define $\mathcal{RV}(G)$ as the set that contains any vertex $v$
such that there is a path from $r$ to $v$ in $R(G)$.
\end{definition}

\begin{proposition}
\label{proposition:rv}
Let $G$ be a directed graph and $r \in V(G)$.
We have: $RV(r) = \mathcal{RV}(r)$.
\end{proposition}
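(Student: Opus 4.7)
The plan is to prove both inclusions $RV(r)\subseteq \mathcal{RV}(r)$ and $\mathcal{RV}(r)\subseteq RV(r)$ by a direct path-reversal argument based on the definition of $R(G)$. The key observation is simply that the map $(u,v)\mapsto(v,u)$ is a bijection between $E(G)$ and $E(R(G))$, so any directed walk in one graph lifts to a directed walk in the other with the vertex sequence reversed.

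For the first inclusion, I would take an arbitrary $v \in RV(r)$ and, by definition, obtain a directed path $v = u_0, u_1, \ldots, u_k = r$ in $G$ with $(u_{i-1},u_i)\in E(G)$ for $i=1,\ldots,k$. By the definition of the reverse graph, each $(u_i,u_{i-1})$ lies in $E(R(G))$. Hence the sequence $r = u_k, u_{k-1}, \ldots, u_0 = v$ is a directed path in $R(G)$ starting at $r$ and ending at $v$, so $v \in \mathcal{RV}(r)$ by Definition~\ref{def:mathcalrv}.

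For the reverse inclusion I would run exactly the same argument, but starting from a directed path from $r$ to $v$ in $R(G)$ and using that $(a,b) \in E(R(G))$ iff $(b,a)\in E(G)$ to reverse it into a directed path from $v$ to $r$ in $G$. A small boundary case to state explicitly is $v = r$: the length-$0$ path places $r$ in both $RV(r)$ and $\mathcal{RV}(r)$ trivially, so the argument goes through uniformly.

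There is no substantive obstacle here; the claim is essentially a restatement of the definition of $R(G)$, and the only care needed is to make the bijection between paths in $G$ ending at $r$ and paths in $R(G)$ starting at $r$ fully explicit, so that the equality of sets (rather than just a numerical coincidence in size) is clear.
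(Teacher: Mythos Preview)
Your proposal is correct and follows essentially the same approach as the paper's own proof: both argue the two inclusions directly from the definition of $R(G)$ by reversing a path from $v$ to $r$ in $G$ into a path from $r$ to $v$ in $R(G)$, and vice versa. Your version is simply more explicit about the vertex sequence and the boundary case $v=r$, but the underlying argument is identical.
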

\begin{proof}
The proof is straight-forward from the definitions of $RV(r)$ and $\mathcal{RV}(r)$.
For each $v\in V(G)$,
if $v \in RV(r)$, then there is a path from $v$ to $r$ and as a result,
there is a path from $r$ to $v$ in $R(G)$.
Hence, $v \in \mathcal{RV}(r)$ and therefore,
$RV(r) \subseteq \mathcal{RV}(r)$.
In a similar way, we can show that $\mathcal{RV}(r) \subseteq RV(r)$.
Therefore, we have: $RV(r) = \mathcal{RV}(r)$.
\end{proof}

An advantage of the above definition of
$\mathcal{RV}(r)$
is that it can be efficiently computed as follows:
\begin{enumerate}
\item first, by flipping the direction of the edges of $G$,
$R(G)$ is constructed.
\item then, if $G$ is weighted, the weights of the edges are ignored,
\item finally, a breadth first search (BFS) or a depth-first search (DFS) on $R(G)$ starting from $r$ is performed.
All the vertices that are met during the BFS (or DFS), except $r$, are added to $\mathcal{RV}(r)$.
\end{enumerate}

In fact, while in $RV(r)$ we require to solve the multi-source shortest path problem (MSSP),
in $\mathcal{RV}(r)$ this is reduced to the single-source shortest path problem (SSSP),
which can be addressed much faster.
Figure~\ref{fig:rv} shows an example of this procedure,
where in order to compute $\mathcal{RV}(r_1)$,
we first generate $R(G)$ (Figure~\ref{fig:rv2})
and then, we run a BFS (or DFS) starting from $r_1$ (Figure~\ref{fig:rv3}).
The set of vertices that are met during the traversal except $r_1$,
i.e., vertices $v_2$, $v_4$ and $v_5$,
form $\mathcal{RV}(r_1)$.

For a vertex $r \in V(G)$,
each of the steps of the procedure of computing $\mathcal{RV}(r)$,
for both unweighted graphs and weighted graphs,
can be computed in $\Theta(|E(G)|)$ time.
Hence, time complexity of the procedure of computing $\mathcal{RV}(r)$
for both unweighted graphs and weighted graphs
is $\Theta(|E(G)|)$.
Therefore, $\mathcal{RV}(r)$ can be computed in a time much faster than computing betweenness score f $r$.
Furthermore, Proposition~\ref{proposition:rv} says that $\mathcal{RV}(r)$ contains all the members of $RV(r)$.
These two imply that both of
the afore-mentioned conditions are satisfied.

\subsection{The exact algorithm}
\label{sec:exact}

In this section,
using the notions and definitions presented in Section~\ref{sec:reachable},
we propose an effective algorithm to compute exact betweenness score of a given vertex $r$ in
a directed graph $G$.

Algorithm~\ref{algorithm:ex-bcd} presents the high level
pseudo code of the \textsf{E-BCD} algorithm proposed for
computing exact betweenness score of $r$ in $G$.
After checking whether or not {\em out degree} of $r$ is $0$,
the algorithm follows two main steps:
(\textit{i}) computing $\mathcal{RV}(G)$
(Lines~\ref{line:step1_1}-\ref{line:step1_2} of Algorithm~\ref{algorithm:ex-bcd}),
where we use the procedure described in Section~\ref{sec:reachable}
to compute $\mathcal{RV}(r)$; and
(\textit{ii}) computing $BC(r)$
(Lines~\ref{line:step2_1}-\ref{line:step2_2} of Algorithm~\ref{algorithm:ex-bcd}),
where for each vertex $v \in \mathcal{RV}(r)$,
we form the SPD rooted at $v$ and compute the dependency score of $v$ on the other vertices and
add the value of $\delta_{v\bullet}(r)$ to the betweenness score of $r$.
Note that if $G$ is weighted, while in the first step the weights of its edges are ignored,
in the second step
and during forming SPDs and computing dependency scores,
we take the weights into account.

Note also that in Algorithm~\ref{algorithm:ex-bcd},
after computing $\mathcal{RV}(r)$,
techniques proposed to improve exact betweenness centrality computation,
such {\em compression} and {\em shattering} \cite{DBLP:conf/sdm/CatalyurekKSS13},
can be used to improve the efficiency of the second step.
This means the algorithm proposed here
is orthogonal to the techniques such as shattering and compression and therefore,
they can be merged.

\paragraph{Complexity analysis}
On the one hand, as mentioned before, time complexity of the first step is $\Theta(|E(G)|)$.
On the other hand,
time complexity of each iteration in Lines~\ref{line:loop1}-\ref{line:loop2}
is $\Theta(|E(G)|)$ for unweighted graphs
and $\Theta(|E(G)|+|V(G)|\log|V(G)|)$ for weighted graphs with positive weights.
As a result, time complexity of \textsf{E-BCD} is
$\Theta(|\mathcal{RV}(G)|\cdot |E(G)|)$ for unweighted graphs and
$\Theta(|\mathcal{RV}(G)|\cdot |E(G)|+|\mathcal{RV}(G)|\cdot |V(G)|\log|V(G)|)$
for weighted graphs with positive weights.
Since most of vertices in real-world networks have a small
reachable set (see Section~\ref{sec:experimentalresults}),
this time complexity improves time complexity of Brandes' algorithm~\cite{jrnl:Brandes}.

\medskip

\begin{algorithm}[!ht]
\caption{High level pseudo code of the algorithm of
computing exact betweenness centrality in directed graphs.}
\label{algorithm:ex-bcd}
\begin{algorithmic} [1]
\STATE \textsf{E-BCD}
\STATE \textbf{Input.} A directed network $G$ and a vertex $r \in V(G)$.
\STATE \textbf{Output.} Betweenness score of $r$.
\IF{{\em out degree} of $r$ is $0$}
\RETURN $0$.
\ENDIF

\STATE \COMMENT{Compute $\mathcal{RV}(r)$:} \label{line:step1_1}
\STATE $\mathcal{RV}(r) \leftarrow \emptyset$.
\STATE $R(G) \leftarrow$ compute the reverse graph of $G$.
\STATE If $G$ is weighted, ignore the weights of the edges of $R(G)$.
\STATE Perform a BFS or DFS on $R(G)$ starting from $r$.
\STATE Add to $\mathcal{RV}(r)$ all the visited vertices, except $r$. \label{line:step1_2}

\STATE \COMMENT{Compute $BC(r)$:} \label{line:step2_1}
\STATE $bc \leftarrow 0$.
\FORALL{vertices $v\in \mathcal{RV}(G)$} \label{line:loop1}
\STATE Form the SPD rooted at $v$ and compute the dependency scores of $v$ on the other vertices.
\STATE $bc \leftarrow bc+ \delta_{v\bullet}(r)$.
\ENDFOR \label{line:loop2} \label{line:step2_2}
\RETURN $bc$.
\end{algorithmic}
\end{algorithm}

\textsf{E-BCD} can be simply revised to compute
betweenness scores of all vertices in a set
$R=\{r_1,\ldots,r_l\}$ ($l$ is the cardinality of $R$).
Let $\mathcal D$ be $\cup_{r_i \in R} \mathcal{RV}(r_i)$.
After forming the SPD rooted at each vertex in
$\mathcal D$, we can easily compute betweenness scores of all the vertices in $R$.
Someone may wonder for what sizes of $R$
\textsf{E-BCD} yields a better algorithm than
Brandes' algorithm.
Assume that forming the SPD rooted at a vertex $v_i$ and computing dependency scores of $v_i$ on other vertices takes the same time $f$ as finding
$\mathcal{RV}(v_i)$.
While in practice the former takes usually more time than the latter (as it needs to traverse the SPD twice), this assumption can be particularly valid in theory for unweighted graphs, as both of these two operations have the same asymptotic time complexity.
\textsf{E-BCD} spends
$l \cdot f+ |\mathcal D| \cdot f  =
(l+|\mathcal D|) \cdot f$ time to compute the scores.
Brandes' algorithm on the other hand spends
$|V(G)| \cdot f$ time.
This means if $l < |V(G)| - |\mathcal D|$,
\textsf{E-BCD} outperforms Brandes' algorithm,
otherwise, Brandes' algorithm will have a smaller
time complexity.

\subsection{The approximate algorithm}
\label{sec:inexact}

For a vertex $r \in V(G)$,
$\mathcal{RV}(r)$ is always smaller than $|V(G)|$ and
as our experiments (reported in Section~\ref{sec:experimentalresults}) show,
the difference is usually significant.
Therefore, \textsf{E-BCD} is usually significantly more efficient than the existing exact algorithms
such as Brandes's algorithm \cite{jrnl:Brandes}.
However, in some cases, the size of $\mathcal{RV}(r)$
can be large (see again Section~\ref{sec:experimentalresults}).
To make the algorithm tractable for the cases where $\mathcal{RV}(r)$ is large,
in this section we propose a randomized algorithm that picks some elements of $\mathcal{RV}(r)$
uniformly at random and only processes these vertices.

Algorithm~\ref{algorithm:app-bcd} shows the high level pseudo code of our
randomized algorithm, called \textsf{A-BCD}.
Similar to \textsf{E-BCD}, \textsf{A-BCD} first
computes $\mathcal{RV}(r)$.
Then, at each iteration $t$ ($1 \leq t \leq T$),
\textsf{A-BCD} picks a vertex $v$ from $\mathcal{RV}(r)$
uniformly at random,
forms the SPD rooted at $v$ and computes $\delta_{v\bullet}(r)$.
In the end, betweenness of $r$ is estimated as the sum of the computed dependency scores on $r$
multiply by $\frac{|\mathcal{RV}(r)|}{T}$.

\begin{algorithm}[!htb]
\caption{High level pseudo code of the algorithm of
computing approximate betweenness centrality in directed graphs.}
\label{algorithm:app-bcd}
\begin{algorithmic} [1]
\STATE \textsf{A-BCD}
\STATE \textbf{Input.} A network $G$, a vertex $r\in V(G)$ and the number of samples $T$.
\STATE \textbf{Output.} Estimated betweenness score of $r$.
\IF{{\em out degree} of $r$ is $0$}
\RETURN $0$.
\ENDIF

\STATE \COMMENT{Compute $\mathcal{RV}(r)$:}
\STATE $\mathcal{RV}(r) \leftarrow \emptyset$.
\STATE $R(G) \leftarrow$ compute the reverse graph of $G$.
\STATE If $G$ is weighted, ignore the weights of the edges of $R(G)$.
\STATE Perform a BFS or DFS on $R(G)$ starting from $r$.
\STATE Add to $\mathcal{RV}(r)$ all visited vertices, except $r$.

\STATE \COMMENT{Estimate $BC(r)$:}
\STATE $bc \leftarrow 0$.
\FORALL{$t=1$ \textbf{to} $T$ } \label{line:loop3}
\STATE Select a vertex $v_t \in \mathcal{RV}(r)$ uniformly at random.
\STATE Form the SPD rooted at $v_t$ and compute dependency scores of $v_t$ on the other vertices.
\STATE $bc \leftarrow bc+ \frac{\delta_{v_t\bullet}(r) \cdot |\mathcal{RV}(r)|}{T} $.
\ENDFOR \label{line:loop4}
\RETURN $bc$.
\end{algorithmic}
\end{algorithm}

\paragraph{Complexity analysis}
Similar to \textsf{E-BCD},
on the one hand, time complexity of the $\mathcal{RV}(r)$ computation step is $\Theta(|E(G)|)$.
On the other hand,
time complexity of each iteration in Lines~\ref{line:loop3}-\ref{line:loop4} of Algorithm~\ref{algorithm:app-bcd}
is $\Theta(|E(G)|)$ for unweighted graphs
and $\Theta(|E(G)|+|V(G)|\log|V(G)|)$ for weighted graphs with positive weights.
As a result, time complexity of \textsf{A-BCD} is
$\Theta(T\cdot |E(G)|)$ for unweighted graphs and
$\Theta(T\cdot |E(G)|+T\cdot |V(G)|\log|V(G)|)$
for weighted graphs with positive weights,
where $T$ is the number of iterations (samples).

\paragraph{Error bound}
Using Hoeffding's inequality \cite{Hoeffding:1963},
we can simply derive an error bound for the estimated value of
betweenness score of $r$.
First in Proportion~\ref{proposition:expectedvalue},
we prove that in Algorithm~\ref{algorithm:app-bcd} the expected value of $bc$ is $BC(r)$.
Then in Proportion~\ref{proposition:errorbound},
we provide an error bound for $bc$.

\begin{proposition}
\label{proposition:expectedvalue}
In Algorithm~\ref{algorithm:app-bcd}, we have: $\mathbb E\left[bc\right]=BC(r)$.
\end{proposition}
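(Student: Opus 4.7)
The plan is to treat $bc$ as the sample mean of a suitable random variable, multiplied by $|\mathcal{RV}(r)|$, and show that a single term in the sum has expectation equal to the average of $\delta_{v\bullet}(r)$ over $\mathcal{RV}(r)$. Combining this with Proposition~\ref{proposition:reachable} and Proposition~\ref{proposition:rv}, which together state that $BC(r) = \sum_{v \in \mathcal{RV}(r)} \delta_{v\bullet}(r)$, will give the claim.

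First, I would dispose of the trivial case: if the out-degree of $r$ is $0$, the algorithm returns $0$ deterministically, and $BC(r) = 0$ by Proposition~\ref{proposition:reachable}, so the equality holds. For the main case, I would write
\begin{equation*}
bc = \sum_{t=1}^{T} \frac{\delta_{v_t\bullet}(r) \cdot |\mathcal{RV}(r)|}{T},
\end{equation*}
where $v_1,\dots,v_T$ are i.i.d.\ samples drawn uniformly from $\mathcal{RV}(r)$, and apply linearity of expectation.

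Next, for any fixed $t$, since $v_t$ is uniform on $\mathcal{RV}(r)$,
\begin{equation*}
\mathbb{E}\!\left[\delta_{v_t\bullet}(r)\right] = \frac{1}{|\mathcal{RV}(r)|} \sum_{v \in \mathcal{RV}(r)} \delta_{v\bullet}(r).
\end{equation*}
Substituting back, the factors of $|\mathcal{RV}(r)|$ cancel and the $T$ identical contributions collapse to give $\mathbb{E}[bc] = \sum_{v \in \mathcal{RV}(r)} \delta_{v\bullet}(r)$. Finally, I would invoke Proposition~\ref{proposition:reachable} (together with Proposition~\ref{proposition:rv} to identify $RV(r)$ with $\mathcal{RV}(r)$) to conclude that this sum equals $BC(r)$.

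There is no real obstacle here: the argument is a textbook unbiasedness computation for a uniform sampler, and all the structural work has already been done in establishing $BC(r) = \sum_{v \in \mathcal{RV}(r)} \delta_{v\bullet}(r)$. The only things to be careful about are making explicit that the $v_t$'s are i.i.d.\ uniform (so that the per-sample expectation is the average over $\mathcal{RV}(r)$) and handling the degenerate out-degree-zero branch separately so that $\mathcal{RV}(r)$ need not be nonempty in that case.
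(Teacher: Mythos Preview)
Your proposal is correct and follows essentially the same approach as the paper: define the per-sample contribution, compute its expectation using uniformity over $\mathcal{RV}(r)$, and apply linearity of expectation. You are in fact slightly more careful than the paper, explicitly handling the out-degree-zero branch and explicitly citing Propositions~\ref{proposition:reachable} and~\ref{proposition:rv} to justify $\sum_{v\in\mathcal{RV}(r)}\delta_{v\bullet}(r)=BC(r)$, which the paper uses implicitly.
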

\begin{proof}
For each $t$, $ 1 \leq t \leq T$, we define random variable $bc_t$ as follows:
$bc_t=\delta_{v_t\bullet}(r) \cdot |\mathcal{RV}(r)|$.
We have:
\begin{align*}
\mathbb E\left[bc_t\right] &= \sum_{v \in \mathcal{RV}(r)}
\left( \frac{1}{|\mathcal{RV}(r)|} \cdot \delta_{v_t\bullet}(r) \cdot |\mathcal{RV}(r)| \right) = BC(r).
\end{align*}
The random variable $bc$ is the average of $T$ independent random variables $bc_t$.
Therefore, we have:
\begin{align*}
\mathbb E\left[ bc \right] = \frac{\sum_{t=1}^T \mathbb E\left[ bc_t \right]}{T}
                 = \frac{T \cdot \mathbb E\left[ bc_t \right]}{T} = BC(r).
\end{align*}
\end{proof}

\begin{proposition}
\label{proposition:errorbound}
In Algorithm~\ref{algorithm:app-bcd},
let $K$ be the maximum dependency score that a vertex may have on $r$.
For a given $\epsilon \in \mathbb R^+$, we have:
\begin{equation}
\label{eq:errorbound}
\mathbb P\left[\left| BC(r)- bc \right| > \epsilon \right] \leq
2\exp \left( -2  T \cdot \left( \frac{\epsilon}{K\cdot |\mathcal{RV}(r)|} \right)^2  \right).
\end{equation}
\end{proposition}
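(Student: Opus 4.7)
The plan is to apply Hoeffding's inequality to the i.i.d.\ sample mean that defines $bc$. To this end, I would reuse the per-iteration random variables $bc_t = \delta_{v_t\bullet}(r)\cdot|\mathcal{RV}(r)|$ already introduced in the proof of Proposition~\ref{proposition:expectedvalue}, and observe that the quantity returned by Algorithm~\ref{algorithm:app-bcd} is exactly $bc = \frac{1}{T}\sum_{t=1}^{T} bc_t$.

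First I would check the three hypotheses needed by Hoeffding. Independence of $bc_1,\dots,bc_T$ follows from the fact that the sampled vertices $v_1,\dots,v_T$ are drawn independently and uniformly from $\mathcal{RV}(r)$ in the loop on Lines~\ref{line:loop3}--\ref{line:loop4}. Boundedness follows from the assumption of the proposition: since $0 \le \delta_{v\bullet}(r) \le K$ for every $v$, we have $0 \le bc_t \le K\cdot|\mathcal{RV}(r)|$, giving a range of width $K\cdot|\mathcal{RV}(r)|$. The expectation condition $\mathbb E[bc_t] = BC(r)$ is precisely the content of Proposition~\ref{proposition:expectedvalue}, and by linearity $\mathbb E[bc] = BC(r)$ as well.

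Then I would invoke Hoeffding's inequality in its two-sided form for the sample mean of $T$ independent random variables taking values in an interval of width $K\cdot|\mathcal{RV}(r)|$:
\begin{equation*}
\mathbb P\bigl[\,|bc - \mathbb E[bc]| > \epsilon\,\bigr] \;\le\; 2\exp\!\left(-\frac{2T\epsilon^2}{\bigl(K\cdot|\mathcal{RV}(r)|\bigr)^2}\right).
\end{equation*}
Substituting $\mathbb E[bc] = BC(r)$ and rewriting the exponent as $-2T\bigl(\epsilon/(K\cdot|\mathcal{RV}(r)|)\bigr)^2$ yields exactly the bound in Equation~\ref{eq:errorbound}.

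There is no serious obstacle in this argument; the only subtlety worth flagging explicitly is that the constant $K$ in the statement plays the role of the range of $\delta_{v\bullet}(r)$, so that the range of $bc_t$ (not of $\delta_{v\bullet}(r)$ itself) is $K\cdot|\mathcal{RV}(r)|$, which is precisely the factor that appears squared in the denominator of Hoeffding's exponent. Independence, rather than tightness of the bound, is the only condition that could be questioned, and it is guaranteed by the uniform-with-replacement sampling prescribed by the algorithm.
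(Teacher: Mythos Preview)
Your proposal is correct and follows essentially the same approach as the paper: define $bc_t=\delta_{v_t\bullet}(r)\cdot|\mathcal{RV}(r)|$, note that these are independent and bounded in $[0,K\cdot|\mathcal{RV}(r)|]$, and apply Hoeffding's inequality to their sample mean $bc$ using $\mathbb E[bc]=BC(r)$ from Proposition~\ref{proposition:expectedvalue}. Your write-up is, if anything, slightly more explicit than the paper's in spelling out the verification of Hoeffding's hypotheses.
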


\begin{proof}
The proof is done using Hoeffding's inequality \cite{Hoeffding:1963}.
Let $X_1, \ldots, X_n$ be independent random variables bounded by the interval
$[a, b]$, i.e., $a \leq X_i \leq b$ ($1 \leq i \leq n$).
Let also $\bar{X}=\frac{1}{n}\left( X_1 + \ldots + X_n \right)$.
Hoeffding \cite{Hoeffding:1963} showed that:
\begin{equation}
\label{eq:hoeffding}
\mathbb P\left[ \left| \mathbb E\left[ \bar{X} \right] - \bar{X} \right| > \epsilon \right] \leq
2 \exp \left( -2 n \cdot \left( \frac{\epsilon}{b-a} \right)^2 \right).
\end{equation}

Similar to the proof of Proposition~\ref{proposition:expectedvalue},
for each $t$, $ 1 \leq t \leq T$, we define random variable $bc_t$ as follows:
$bc_t=\delta_{v_t\bullet}(r) \cdot |\mathcal{RV}(r)|$.
Note that in Algorithm~\ref{algorithm:app-bcd} vertices $v_t$
are chosen independently, as a result, random variables $bc_t$ are independent, too.
Hence, we can use Hoeffding's inequality,
where
$X_i$'s are $bc_t$'s,
$\bar{X}$ is $bc$,
$n$ is $T$,
$a$ is $0$ and
$b$ is $K\cdot |\mathcal{RV}(r)|$.
Putting these values into Inequality~\ref{eq:hoeffding}
yields Inequality~\ref{eq:errorbound}.
\end{proof}

Inequality~\ref{eq:errorbound} says that for given
values $\epsilon \in \mathbb R^+$ and $\delta \in (0,1)$,
if $T$ is chosen such that
\begin{equation}
\label{eq:samplesnumber}
T \geq
\frac{\ln\left(\frac{2}{\delta} \right) \cdot K^2 \cdot |\mathcal{RV}(r)|^2}{2 {\epsilon}^2 },
\end{equation}
then, Algorithm~\ref{algorithm:app-bcd} estimates
betweenness score of $r$
within an additive error $\epsilon$ with a probability at least $1-\delta$.
The difference between Inequality~\ref{eq:samplesnumber} and the number of samples required by
the methods that uniformly sample from the set of all vertices (e.g., \cite{jrnl:Brandes3})
is that in the later case, the lower bound on the number of samples is a function
of $|V(G)|^2$, instead of $|\mathcal{RV}(r)|^2$.
As mentioned earlier, for most of the vertices, $|\mathcal{RV}(r)| \ll |V(G)|$.

\section{Experimental results}
\label{sec:experimentalresults}

We perform extensive experiments
on several real-world networks to assess
the quantitative and qualitative behavior of our proposed exact and approximate algorithms.
The experiments are done on an Intel
processor clocked at 2.6 GHz with 16 GB
main memory, running Ubuntu Linux 16.04 LTS.
All the programs are compiled by the GNU C++ compiler 5.4.0 using optimization level 3.

\medskip
We test the algorithms over several real-world datasets from different domains, including
the {\em amazon} product co-purchasing network \cite{DBLP:journals/tweb/LeskovecAH07},
the {\em com-dblp} co-authorship network \cite{DBLP:conf/icdm/YangL12},
the {\em com-amazon} network \cite{DBLP:conf/icdm/YangL12}
the {\em p2p-Gnutella31} peer-to-peer network \cite{DBLP:journals/tkdd/LeskovecKF07},
the {\em slashdot} technology-related news network \cite{DBLP:conf/chi/LeskovecHK10} and
the {\em soc-sign-epinions} who-trust-whom online social network \cite{DBLP:conf/chi/LeskovecHK10}.
All the networks are treated as directed graphs.
Table~\ref{table:dataset} summarizes specifications of our real-world networks.

\makesavenoteenv{tabular}
\makesavenoteenv{table}

\begin{table}[h]\small
\vspace*{-2mm}
\caption{\label{table:dataset}Summary of real-world datasets.}\vspace*{-1mm}
\begin{center}
\begin{tabular}{ l | l l  }
\hline
Dataset &\# vertices & \# edges \\ \hline
Amazon\footnote{\url{http://snap.stanford.edu/data/amazon0302.html}} & 262,111 & 1,234,877  \\
Com-amazon\footnote{\url{http://snap.stanford.edu/data/com-Amazon.html}}  & 334,863 & 925,872 \\
Com-dblp\footnote{\url{http://snap.stanford.edu/data/com-DBLP.html}} & 317,080 & 1,049,866  \\
Email-EuAll\footnote{\url{https://snap.stanford.edu/data/email-EuAll.html}} & 224,832 &340,795 \\
P2p-Gnutella31\footnote{\url{http://snap.stanford.edu/data/p2p-Gnutella31.html}} & 62,586 & 147,892  \\
Slashdot\footnote{\url{http://snap.stanford.edu/data/soc-sign-Slashdot090221.html}} & 82,144 & 549,202  \\
Soc-sign-epinions\footnote{\url{http://snap.stanford.edu/data/soc-sign-epinions.html}} &  131,828 & 841,372 \\
Web-NotreDame\footnote{\url{https://snap.stanford.edu/data/web-NotreDame.html}} & 325,729  & 1,497,134 \\
\hline
\end{tabular}
\end{center}\vspace*{-3mm}
\end{table}

As mentioned before,
for a directed graph $G$ and a vertex $r \in V(G)$,
both of our proposed exact and approximate algorithms
first compute $\mathcal{RV}(r)$, which can be done very effectively.
Then, based on the size of $\mathcal{RV}(r)$,
someone may decide to use either the exact algorithm or the approximate algorithm.
Hence in our experiments, we follow the following procedure:
\begin{itemize}
\item first, compute $\mathcal{RV}(r)$,
\item then, if $|\mathcal{RV}(r)| \leq \tau$, run \textsf{E-BCD};
otherwise, run \textsf{A-BCD} with $\tau$ as the number of samples.
\end{itemize}
We refer to this procedure as \textsf{BCD}.
The value of $\tau$ depends on the amount of time someone wants to spend for computing betweenness centrality.
In our experiments reported here, we set $\tau$ to 1000.
We compare our method against the most efficient existing algorithm for approximating betweenness centrality,
which is \textsf{KADABRA} \cite{DBLP:conf/esa/BorassiN16}.

\medskip
For a vertex $r \in V(G)$, its empirical approximation error is defined as:
\begin{equation}
\label{eq:vertexerror}
Error(v)=\frac{|App(v)-BC(v)| }{BC(v)} \times 100,
\end{equation}
where $App(v)$ is the calculated approximate score.

\subsection{Results}

Table~\ref{table:1000random} reports the results of our first set of experiments.
For \textsf{KADABRA}, we have set $\epsilon$ and $\delta$ to $0.01$ and $0.1$, respectively\footnote{For given
values of $\epsilon$ and $\delta$,
\textsf{KADABRA} computes the {\em normalized betweenness}
of the vertices of the graph within an error $\epsilon$
with a probability at least $1-\delta$.
The {\em normalized betweenness} of a vertex is its betweenness score divided by
$|V(G)|\cdot\left(|V(G)|-1\right)$.
Therefore, we multiply the scores computed by \textsf{KADABRA} by
$|V(G)|\cdot\left(|V(G)|-1\right)$.
}.
Then from each dataset we choose $1000$
vertices uniformly at random and run \textsf{BCD} for any of these
vertices.
For the \textsf{BCD} algorithm, we report
both "Avg. time"
and "$\text{Avg. time}_{\mathcal{RV}}$",
where "$\text{Avg. time}_{\mathcal{RV}}$" is the average of run times of computing $\mathcal{RV}$
and "Avg. time" is the average of run times of the other parts of the algorithm.
The average total running time of \textsf{BCD}
is the sum of "Avg. time" and "$\text{Avg. time}_{\mathcal{RV}}$".
We also report
"Avg. error", which is the average of
empirical approximation errors
(defined in Equation~\ref{eq:vertexerror}),
and "$\%$exact" that presents the percentage of the vertices
for which \textsf{BCD}
computes betweenness scores exactly, hence, their approximation error is $0$.
We remind that if $|\mathcal{RV}| \geq 1000$, approximate \textsf{BCD} is used, otherwise, exact \textsf{BCD} is employed.
As can be seen in the table, \textsf{BCD} estimates
betweenness centrality of a single vertex much faster
and with much less error.
It is notable that in most cases,
\textsf{BCD} computes the exact score within a tiny time,
whereas \textsf{KADABRA}
estimates the score with a large error within a much longer time.

\begin{table*}[!h]
\vspace*{-2mm}
\caption{\label{table:1000random}Empirical evaluation of \textsf{BCD} against \textsf{KADABRA}
for $1000$ randomly chosen vertices.
Values of $\delta$ and $\epsilon$ are $0.1$ and $0.01$, respectively.
All the reported times are in seconds.
The number of samples in \textsf{A-BCD} is $1000$.
'$\%$exact' presents the percentage of the vertices for which betweenness scores are computed exactly by \textsf{BCD-E} and 'Avg. $\text{time}_{\mathcal{RV}}$' presents the average time to compute reachable vertices.
}\vspace*{-2mm}
\begin{center}
\resizebox{\textwidth}{!}{%
\begin{tabular}{ l | l l | l l l | l l l l}
\hline
Dataset & \multicolumn{2}{|c|}{Randomly chosen vertices} & \multicolumn{3}{|c|}{\textsf{KADABRA}} & \multicolumn{4}{|c}{\textsf{BCD}} \\
       & Avg. $|\mathcal{RV}(r)|$ & Avg. $\frac{|\mathcal{RV}(r)|}{|V(G)|}$ &
           $\#$samples & Time & Error ($\%$) & $\%$exact & Avg. time & Avg. $\text{time}_{\mathcal{RV}}$ & Avg. error ($\%$)\\
\hline
Amazon   & 3453.714 & 0.013 & 16739 & 19.14  & 100 & 92.857 & 0.673 & 0.331 & 0.018 \\
Com-amazon & 74.533  & 0.0002 & 15036 & 27.70 & 100  & 100 & 0.512 & 0.367 & 0 \\
Com-dblp & 24635.923 & 0.077 & 17873 & 26.14 & 100 & 69.230 & 2.14 & 0.322 & 2.678 \\
Email-EuAll & 13652.785 & 0.0607 & 17066 & 16.01 &  100 & 64.285 & 0.964 &  0.083 &0.995 \\
P2p-Gnutella31  & 7246.071 & 0.115 & 16401 & 6.88 & 100 & 57.142 & 2.221 & 0.046 & 5.854 \\
Slashdot0902  & 6662.866 & 0.0811 & 17421 & 7.95 & 100 & 80 & 0.995 & 0.130 & 6.279 \\
Soc-sign-epinions & 14567.875 & 0.110 & 19099 & 11.28 & 100  & 62.5 & 1.789 & 0.150  & 9.234 \\
Web-NotreDame  & 431.714 & 0.001 & 19908 & 27.29 & 100  & 85.714  & 0.852 & 0.240  &  0.041  \\
\hline
\end{tabular}
}
\end{center}\vspace*{-2mm}
\end{table*}

\begin{table*}[!ht]
\caption{\label{table:randomexp1}Empirical evaluation of \textsf{BCD} against \textsf{KADABRA}
for some of randomly chosen vertices.
Values of $\delta$ and $\epsilon$ are $0.1$ and $0.01$, respectively.
All the reported times are in seconds.
The number of samples in \textsf{A-BCD} is $1000$.
}\vspace*{-7mm}
\begin{center}
\resizebox{\textwidth}{!}{%
\begin{tabular}{ l | l l l l | l l l | l l l l}
\hline\hline
Dataset & \multicolumn{4}{|c|}{Randomly chosen vertices} & \multicolumn{3}{|c|}{\textsf{KADABRA}} & \multicolumn{4}{|c}{\textsf{BCD}} \\
        & $r$ & $BC(r)$ & $|\mathcal{RV}(r)|$ & $\frac{|\mathcal{RV}(r)|}{|V(G)|}$ &
           $\#$samples & Time & Error ($\%$) & E/A & Time & $\text{Time}_{\mathcal{RV}}$ & Error ($\%$)\\
\hline\hline
Amazon  & 13645  & 19613.1 & 47187 & 0.1800& 16739 & 19.14  & 100 & A & 2.60 & 0.26 & 0.26 \\
        & 91289  & 87523.6 & 150  &0.0005 &  &   & 100 & E & 0.67 & 0.29 & 0  \\
        & 17054  & 35752.6 & 533   &0.0020 &  &  & 100 & E & 1.26 & 0.29 &  0 \\
        & 231249 & 10449.4 & 4  &0.00001 &  &    & 100 &  E & 0.11 & 0.30 & 0 \\
        & 246486 & 1837.58 & 34  &0.0001 &  &    & 100 & E & 0.17 & 0.30 & 0 \\
\hline
Com-amazon& 202389 & 1486.8 & 13 & 0.00003 & 15036 & 27.70 & 100  & E & 0.14 & 0.27 & 0    \\
          & 263212 & 364 & 3 & 0.000008 & &  &  100 & E & 0.12 & 0.27 & 0\\
          & 81097 & 11 & 14 & 0.00004 & &  &  100 & E & 0.15 & 0.27 & 0 \\
          & 13732 & 1701.51 & 616 & 0.0018 & &  & 100 & E & 1.41 &  0.28 & 0 \\
          & 29825 & 139 & 15 & 0.00004 & &  &  100 & E & 0.15 & 0.27 & 0 \\
\hline
Com-dblp  & 4456   & 10153 & 2092 & 0.0065& 17873 & 26.14 & 100 &  A & 5.74 & 0.26 & 1.10 \\
          & 278950 & 34326.5 & 11 & 0.00003  & &  & 100 & E & 0.13 & 0.27 & 0   \\
          & 244680 & 232994 & 22 & 0.00006  & &  & 100 & E &  0.21 & 0.27 & 0 \\
          & 21141  & 1957.93 & 73 & 0.0002 & &  & 100 & E &  0.48 & 0.27 & 0 \\
          & 129908 & 303543 & 41 & 0.0001 & &  &  100 & E & 0.53 & 0.29 & 0 \\
\hline
Email-EuAll& 25362  & 1869.16 & 2 & 0.000008 & 17066 & 16.01 &  100 & E & 0.03 & 0.08 & 0\\
           & 16682  & 2269.29 & 64 & 0.0002 & &  &  100 & E & 0.14 & 0.08 & 0 \\
           & 8796   & 241434 & 21181 & 0.0942 & &  & 100 & A & 1.88 &  0.07 & 1.72 \\
           & 50365  & 3 & 2 &  0.000008   & &  &  100 &  E & 0.03 & 0.07 & 0 \\
           & 2139   & 503650 & 111674 & 0.4966 & &  &  100 &  A & 1.78 & 0.08 & 3.59 \\
\hline
P2p-Gnutella31& 46263  & 12655.2 & 2 & 0.00003 & 16401 & 6.88 & 100  & E & 0.03 & 0.04 & 0\\
              & 34547  & 3538.79 & 173 & 0.0027 & &   &  100 &  E & 0.95 & 0.04 & 0\\
              & 54609  & 27824.9 & 3  & 0.00004 & &   &  100 &  E & 0.03 & 0.04 &  0 \\
              & 37518  & 6175.2 & 24141 & 0.3857 & &  &    100 &  A & 2.44 & 0.06 & 11.31 \\
              & 9781   & 4582130 &   3 & 0.00004& &   &  100 &  E & 0.02 & 0.04 & 0 \\
\hline
Slashdot0902  & 20825  & 15940.9 & 21 & 0.0002& 17421 & 7.95 & 100  & E & 0.17 & 0.16 & 0 \\
              & 47806  & 15891.7 & 3 &0.00003 & &    &  100 & E & 0.06 & 0.15 & 0 \\
              & 48251  & 21744 & 3 & 0.00003& &   &  100 & E &  0.05 & 0.15 & 0 \\
              & 20969  & 43067 & 369 &0.0044 & &   &  100 &  E & 2.30 & 0.17 & 0 \\
              & 57099  & 6165.01 & 2 & 0.00002& &   &  100 &  E &  0.05 & 0.15 & 0 \\
\hline
Soc-sign-epinions & 2740   & 2352.43 & 36393 & 0.2760& 19099 & 11.28 & 100  & A & 4.57 & 0.17 & 55.34 \\
                  & 24080  & 9198.78 & 2621 &0.0198  & &    &  100 & A & 4.60 & 0.15 & 18.48\\
                  & 38349  & 75201.9 & 35 & 0.0002 & &     &  100 & E & 0.24 & 0.14 & 0 \\
                  & 82156  & 8802 & 34 & 0.0002  & &    &   100 & E & 0.19 & 0.14 & 0 \\
                  & 38266  & 8052 & 3 & 0.00002 & &   &  100 &  E & 0.04 & 0.14 & 0 \\
\hline
Web-NotreDame     & 21026   & 140 & 9 &  0.00002 & 19908 & 27.29 & 100  & E & 0.08 & 0.25  & 0  \\
                  & 133847  & 9003.53 & 797 &0.0024 & &    &  100 &  E & 1.84 & 0.25  & 0 \\
                  & 307622  & 4212.33 & 44 & 0.0001 & &    &   100 & E & 0.18 & 0.25 & 0  \\
                  & 176211  & 2157.42 & 30 & 0.00009  & &   &   100 & E & 0.14 & 0.25 & 0  \\
                  & 307134  & 3079.5 & 123 & 0.0003 & &    &   100 & E & 0.35 & 0.25 & 0 \\
\hline\hline
\end{tabular}
}
\end{center}\vspace*{-3mm}
\end{table*}

In order to investigate the behavior of the algorithms
more deeply, over each dataset we choose $5$ vertices at random and
report their results in Table~\ref{table:randomexp1}.
This table has a column, called "A/E", where "E" means that
the computed score by \textsf{BCD} is
exact (hence, the approximation error is $0$) and
"A" means that $\mathcal{RV}$ is larger than $1000$, therefore
approximate \textsf{BCD} has been employed.

As can be seen in Table~\ref{table:randomexp1}, for most of the randomly picked up vertices,
$\mathcal{RV}$ is very small and it can be computed very efficiently.
This gives exact results in a very short time, less than $3$ seconds in total.
In all these cases, while \textsf{KADABRA} spends considerably more time,
since it estimates that the normalized betweenness scores are less than the error bound $\epsilon$,
it simply estimates them as $0$.\footnote{\textsf{KADABRA}
aims to provide an estimation whose error,
with a high probability, is at most $\epsilon$.
Therefore, when it estimates that with a high probability the betweenness score of a
vertex is less than $\epsilon$, it estimates the score as $0$.
In this way, with high probability the theoretical error will be bounded by $\epsilon$.}
Therefore, its empirical approximation error
becomes $100\%$.
The randomly picked up vertices belong to the different ranges of betweenness scores,
including high, medium and low.

\medskip
After observing these experimental results,
someone may be interested in the following questions:
\begin{itemize}
\item[\textbf{Q1.}]
The accuracy of \textsf{KADABRA} depends on the values of $\epsilon$ and $\delta$.
Can changing (increasing or decreasing) their values improve
the performance of \textsf{KADABRA} and make it be comparable to \textsf{BCD}?
\item[\textbf{Q2.}]
\textsf{KADABRA} is more efficient for the vertices that have the highest betweenness scores
and since most of the randomly chosen vertices do not have a very high betweenness score,
compared to \textsf{EBC}, \textsf{KADABRA} does not show a good performance.
What is the efficiency of \textsf{BCD}, compared to \textsf{KADABRA}, for the vertices
that have the highest betweenness scores?
\item[\textbf{Q3.}]
In the experiments reported in Table~\ref{table:randomexp1},
\textsf{BCD} is used to estimate betweenness score of only one vertex.
However, in practice it might be required to estimate betweenness scores of
a given set of vertices.
How efficient is \textsf{BCD} in this setting?
\end{itemize}

In the rest of this section, we answer these questions.

\begin{table}[!h]
\vspace*{-1mm}
\caption{\label{table:randomexp2}Empirical evaluation of \textsf{KADABRA}
for $\delta=0.1$, $\epsilon=0.005$ and $0.05$.}
\tiny
\begin{center}
\begin{tabular}{ l | l | l l l | l l l }
\hline\hline
Dataset & Vertex $r$ & \multicolumn{3}{|c|}{\textsf{KADABRA} ($\epsilon=0.005$)} & \multicolumn{3}{|c}{\textsf{KADABRA} ($\epsilon=0.05$)} \\
        &   & $\#$samples & Time &  Error ($\%$) & $\#$samples & Time & Error ($\%$) \\
\hline\hline
Amazon  & 13645  & 47330 & 53.98  & 100  & 1615 & 3.648 & 100 \\
        & 91289  &  &  &  100  &  &  & 100 \\
        & 17054  &  &  &  100  &  &  & 100 \\
        & 231249 &  &  &  100   &  &  & 100 \\
        & 246486 &  &  &  100  &  &  & 100 \\
\hline
Com-amazon& 202389 & 42207 & 58.76 & 100  &  1390 & 4.36 & 100 \\
          & 263212 &  &  &  100   &    &  & 100 \\
          & 81097 &   &  &  100   &   &   & 100 \\
          & 13732 &    &  &  100   &   &   & 100 \\
          & 29825 &   &  &  100   &    &  & 100 \\
\hline
Com-dblp  & 4456   &  50667  & 77.40 & 100  & 1627 & 4.15 & 100 \\
          & 278950 &   &  &  100   &  &  & 100 \\
          & 244680 &    &  & 100   &  &  & 100 \\
          & 21141  &    &  & 100   &  &  & 100 \\
          & 129908 &    &  & 100  &  &  & 100 \\
\hline
Email-EuAll& 25362  & 48079 & 43.43 &  100  &  1390 & 2.28 & 100 \\
           & 16682  &    &  &  100  &  &  & 100 \\
           & 8796   &    &  &  100  &  &  & 100 \\
           & 50365  &    &  &  100  &  &  & 100 \\
           & 2139   &   &  &   100  &  &  & 100 \\
\hline
P2p-Gnutella31& 46263  &  47631  & 18.12 &  100  & 1445 & 0.81 & 100 \\
              & 34547  &    &  &  100  &  &  & 100 \\
              & 54609  &    &  &  568.32  &  &  & 100 \\
              & 37518  &    &  & 100  &  &  & 100 \\
              & 9781   &    &  &  6.52  &  &  & 100 \\
\hline
Slashdot0902  & 20825  &  50776  & 22.38 & 100  & 1542 & 0.94 & 100 \\
              & 47806  &    &  &  100  &  &  & 100 \\
              & 48251  &    &  & 100  &  &  & 100 \\
              & 20969  &    &  &  100  &  &  & 100 \\
              & 57099  &    &  &  100  &  &  & 100 \\
\hline
Soc-sign-epinions & 2740   &  53667  & 30.54 &  100  & 1479 & 1.90 & 100 \\
                  & 24080  &    &  &  100  &  &  & 100 \\
                  & 38349  &    &  &  100  &  &  & 100 \\
                  & 82156  &    &  &  100  &  &  & 100 \\
                  & 38266  &    &  &  100  &  &  & 100 \\
\hline
Web-NotreDame     & 21026   &  51015  & 73.92 & 100  & 1935 & 2.21 & 100 \\
                  & 133847  &    &  &  100  &  &  & 100 \\
                  & 307622  &    &  &  100  &  &  & 100 \\
                  & 176211  &    &  & 100  &  &  & 100 \\
                  & 307134  &    &  & 100  &  &  & 100 \\
\hline\hline
\end{tabular}
\end{center}\vspace*{-1mm}
\end{table}

\begin{table}[htbp]
\caption{\label{table:highestexp}Empirical evaluation of \textsf{BCD} against \textsf{KADABRA-TOP-1}
for vertices with the highest betweenness scores.
The value of $\delta$ is 0.1.
All the reported times are in seconds. The number of samples in \textsf{A-BCD} is $1000$.
}
\tiny
\begin{center}
\rotatebox{90}{
\begin{tabular}{ l | l l l l | l l l l | l l l}
\hline \hline
Dataset & \multicolumn{4}{|c|}{Vertex with the highest BC} & \multicolumn{4}{|c|}{\textsf{KADABRA-TOP-1}} & \multicolumn{3}{|c}{\textsf{BCD}} \\
        & $r$ & $BC(r)$ & $|\mathcal{RV}(r)|$ & $\frac{|\mathcal{RV}(r)|}{|V(G)|}$ &
           $\epsilon$ & $\#$samples & Time & Error ($\%$) & Time & $\text{Time}_{\mathcal{RV}}$ & Error ($\%$)\\
\hline\hline
Amazon  & 2804  & 16066000 & 162707 &0.6207 &  0.01 & 16181 & 0.26 & - & 2.38 & 0.29  & 1.35  \\
        &       &         &       & &  0.005 &  45320  &  0.56   &  71.69 &    &   &   \\
        &       &         &       & &  0.0005 &  1459502    &  16.65  & 3.01 &    &   &   \\
\hline
Com-amazon& 28081 & 378550 & 3812 & 0.0113 & 0.01 & 14619 & 0.14  & - & 2.31 & 0.28 & 0.52 \\
        &       &         &       & &  0.005 &  40590   &   0.21    &  -  &    &   &   \\
        &       &         &       & &  0.0005 &  1249908    &  3.86 & 28.90 &    &   &   \\
\hline
Com-dblp  & 49124   & 24821300 & 70561 & 0.2225 & 0.01 & 17303 & 0.64 &  17.04 & 6.27 & 0.27 & 9.77 \\
        &       &         &       & &  0.005 &  48411   &    1.62   &  7.96  &    &   &   \\
        &       &         &       & &  0.0005 &    1581635  &  54.11   & 6.79  &    &   &   \\
\hline
Email-EuAll& 2387  & 15943100 & 102596 & 0.4563 & 0.01 & 16588 &  0.10 & 33.79 & 1.76 & 0.08 & 3.37 \\
        &       &         &       & &  0.005 &  46123  &  0.17  &  17.50  &    &   &   \\
        &       &         &       & &  0.0005 &   1471932   &   3.87  & 4.04  &    &   &   \\
\hline
P2p-Gnutella31& 9781  & 4580850 & 36141 & 0.5774 & 0.01 & 13618 & 0.32  & 57.61 & 1.78 & 0.04 & 2.59 \\
        &       &         &       & &   0.005 &  40909  &  1.00   & 6.51   &    &   &   \\
        &       &         &       & &  0.0005 &   1515822 & 38.31  & 0.32  &    &   &   \\
\hline
Slashdot0902  & 18238  & 8531850 & 19153 & 0.2331 & 0.01 & 16962 & 0.99  & 11.96 & 3.90 & 0.10 & 3.37 \\
        &       &         &       & &   0.005 &  44847  &  2.52     &  5.87  &    &   &   \\
        &       &         &       & &  0.0005 & 1718486 &  103.89 &  0.16  &    &   &   \\
\hline
Soc-sign-epinions & 27463   & 26116100 & 9880 &0.0749 & 0.01 & 18601 & 1.10  & 2.25 & 5.43 & 0.12 & 2.30 \\
        &       &         &       & &  0.005 &   51502  &  2.97     &  0.23  &    &   &   \\
        &       &         &       & &  0.0005 &   2398143   &  143.92   &   1.61  &    &   &    \\
\hline
Web-NotreDame     & 7137   & 323101000 & 233965 & 0.7182  & 0.01 & 19448 & 0.18  & 1.30 & 2.71 & 0.235  & 0.26  \\
        &       &         &       & &  0.005 &  49456   &    0.30   &   7.56  &    &   &   \\
        &       &         &       & &  0.0005 &  779273  &  3.93  &  2.22  &    &   &   \\
\hline\hline
\end{tabular}
}
\end{center}
\end{table}

\paragraph{\textbf{Q1}}
To answer Q1, first we fix $\delta$ to $0.1$ and
run \textsf{KADABRA} with $\epsilon=0.005$ (i.e., with a lower value)
and $\epsilon=0.05$ (i.e., with a higher value).
The results are reported in Table~\ref{table:randomexp2}.
In these two settings,
most of the scores estimated by \textsf{KADABRA} are still $0$.
There are only two exceptions where, however,
the approximation error is high.
For $\epsilon=0.005$,
the running time of \textsf{KADABRA}
is considerably more than its running time for $\epsilon=0.01$ and as a result,
the running time of \textsf{BCD}.
However, paying this extra cost does not improve its accuracy,
with respect to \textsf{BCD}.
Increasing $\epsilon$ to $0.05$, reduces running time of \textsf{KADABRA}
and makes it comparable to the running time of \textsf{BCD}.
However, \textsf{BCD} shows a much better accuracy.

\medskip
Then, we fix $\epsilon$ to $0.01$ and
run \textsf{KADABRA} with $\delta=0.05$ (i.e., with a lower value)
and $\delta=0.15$ (i.e., with a higher value).
In these cases, we do not observe meaningful changes in the behavior
(running time and accuracy) of \textsf{KADABRA}.
We may only state that in the case of $\delta=0.15$,
the algorithm works slightly faster.
As a result, it seems \textsf{KADABRA} is less sensitive to
the value of $\delta$ than to the value of $\epsilon$.
Due to the high similarity of the results obtained in these two cases to
the results of Table~\ref{table:randomexp1}, we do not report them.

\paragraph{\textbf{Q2}}
To answer Q2, over each dataset we examine the algorithms
for the vertex that has the highest betweenness score\footnote{We already
find this vertex using the exact algorithm.}.
The results are reported in Table~\ref{table:highestexp}.
\textsf{KADABRA} can be optimized to estimate
betweenness centrality of only top $k$ vertices, where $k$ is an input parameter.
In the experiments of this part, we use this optimized version of \textsf{KADABRA} with $k=1$
and refer to it as \textsf{KADABRA-TOP-1}.
In \textsf{KADABRA-TOP-1},
we consider three values for $\epsilon$: $0.01$, $0.005$ and $0.0005$
and in all the cases, we set $\delta$ to $0.1$.
Similar to the other experiments, we run \textsf{BCD} with $\tau=1000$.
In all the experiments of this part, the size of $\mathcal{RV}$ becomes larger than $1000$, hence,
the scores computed by \textsf{BCD} are approximate scores.
In Table~\ref{table:highestexp}, in three cases the error of \textsf{KADABRA-TOP-1} is not reported.
The reason is that in these cases the vertex $r$ that has the highest betweenness score,
is not among the vertices considered by \textsf{KADABRA-TOP-1} as a top-score vertex.
Hence, \textsf{KADABRA-TOP-1} does not report any value for it.

\medskip
In this setting, none of the
algorithms outperforms the other one in all the cases.
More precisely, while for some values of $\epsilon$
\textsf{KADABRA-TOP-1} has a better accuracy as well as a higher running time,
in some other cases the story is in the other way.
Nevertheless, we can
investigate the datasets one by one.
Over {\em amazon}, for all values of $\epsilon$,
\textsf{BCD} has a better approximation error than \textsf{KADABRA-TOP-1}.
In particular, for $\epsilon=0.0005$,
\textsf{KADABRA-TOP-1} takes much more time but produces a less accurate output.
Hence, we can argue that over {\em amazon} \textsf{BCD} outperforms \textsf{KADABRA-TOP-1}.
The same holds for {\em com-amazon}, {\em email-EuAll} and {\em web-NotreDame}
and over all these datasets, \textsf{BCD} outperforms \textsf{KADABRA-TOP-1}.
Over {\em com-dblp},
for $\epsilon=0.005$, \textsf{KADABRA-TOP-1} outperforms \textsf{BCD} in terms of both
accuracy and running time.
This also happens over {\em soc-sign-epinions} for $\epsilon=0.01$ and $0.005$.
Hence, someone may argue that over these two datasets \textsf{KADABRA-TOP-1} outperforms \textsf{BCD}.
Over {\em p2p-Gnutella31} and {\em slashdot0902},
on the one hand
for $\epsilon=0.01$ and $0.005$,
\textsf{BCD} shows a better accuracy, however, it is slightly slower.
On the other hand, for $\epsilon=0.0005$, \textsf{KADABRA-TOP-1} shows a better accuracy,
however, it takes much more time.
Altogether, we can say that for estimating betweenness scores of the vertices
that have the highest scores, in most of the datasets \textsf{BCD} works better than \textsf{KADABRA-TOP-1}.

\begin{table*}[!htb]
\caption{\label{table:setexp}Empirical evaluation of
estimating betweenness scores of a set of vertices.
All the reported times are in seconds. The number of samples in \textsf{A-BCD} is $1000$.
}
\begin{center}
\resizebox{\textwidth}{!}{%
\begin{tabular}{ l | l | l l l | l l | l l l }
\hline
Dataset & Set size & \multicolumn{3}{|c|}{Error ($\%$)} & Time & $\text{Time}_{\mathcal{RV}}$ & \multicolumn{3}{|c}{$\mathcal{RV}$ size} \\
        &    &  Avg.  &  Max. & Min. &  &  &  Avg.  &  Max. & Min. \\
\hline
Amazon  & 5   &  1.47 & 7.10 &  0  & 4.81  & 1.44 &  9581.6 & 47187 & 4  \\
        & 10  &  0.73 & 7.10 &  0  &  7.42 & 3.21 & 4818.4 & 47187 & 1   \\
        & 15  &  0.88 & 7.10 &  0  &  9.74 & 4.98 & 3497.798 & 47187 & 1   \\
\hline
Com-amazon& 5   &  0  & 0  &  0  &  1.98 & 1.36 & 132.2 & 616 & 3  \\
        & 10  &   0  &  0  &  0  & 4.92 & 3.43 & 91.2 & 616 & 2   \\
        & 15  &  0  &  0  &  0  &  7.07 &  5.48 & 65.93 & 616 & 1   \\
\hline
Com-dblp& 5   &  0.22  & 1.10  &  0  &  7.09 & 1.36 & 447.8 & 2092 & 11   \\
        & 10  & 3.47 & 19.45 &  0  & 20.71 & 3.08  & 24483.6 & 227218 & 1  \\
        & 15  & 2.32 & 19.45 &  0  & 28.81 & 4.92  & 21351.33 & 227218 & 1  \\
\hline
Email-EuAll& 5   &  1.06  &  3.59  &  0  & 3.86  & 0.38  & 26584.6 & 111674 & 2  \\
        & 10  &  1.39  & 7.95 &  0  & 9.76 & 0.78  & 19020.9 & 111674 & 2  \\
        & 15  &  0.93  &  7.95  &  0  & 13.52 & 1.27  & 12742.8 & 111674 &  2  \\
\hline
P2p-Gnutella31 & 5   &   2.26  &  11.31   &  0  & 3.47  &  0.22   & 4864.2 & 24141 & 2  \\
               & 10  &   7.26   &  39.17   &  0  &  23.09 &  0.46  & 5493.6 & 24141 & 2   \\
               & 15  &  6.79   &  39.17   &  0  &  33.27 &  0.72  & 8637.73 & 28122 & 2   \\
\hline
Slashdot0902  & 5   &    0  &  0   &  0  &  2.62  &  0.78   & 79.6 & 369 & 2  \\
        & 10  &   5.04   &  50.48   &  0  &  11.37  &  1.38   & 3784.3 & 26802 & 1    \\
        & 15  &   4.92   &  50.48   &  0  &  14.93  &  1.99   & 6662.86 & 62089 & 1    \\
\hline
Soc-sign-epinions & 5   &  13.37  &  48.37 &  0  & 9.64   &  0.74  & 7817.2 & 36393 & 3 \\
        & 10  &  9.68  &  48.37   &  0  &  17.71  &  1.52  & 20302.7 & 109520 & 1  \\
        & 15  &  9.38  &  48.37  &  0  &  28.46  &  2.28  & 15538.86 & 109520 & 1   \\
\hline
Web-NotreDame   & 5   &   0   & 0    &  0  & 2.58   & 1.25  & 200.6 & 797 & 9 \\
        & 10  &   0   &  0   &  0  &  6.89  & 2.44   & 231.5 & 1092 & 9  \\
        & 15  &   0.03   &  0.30   &  0  &  13.16  &  3.62  & 414.46 & 2610 & 1  \\
\hline
\end{tabular}
}
\end{center}
\end{table*}

\paragraph{\textbf{Q3}}
To answer Q3,
we select a random set of vertices and
run \textsf{BCD} for each vertex in the set.
The results are reported in Table~\ref{table:setexp}, where the set contains 5, 10 or 15 vertices.
Over all the datasets and for each set of vertices, we report the
average, maximum and minimum errors of the vertices.
For all the datasets, minimum error is always 0.
In Table~\ref{table:setexp},
"$\text{Time}_{\mathcal{RV}}$" is the total time of computing
$\mathcal{RV}$ of all the vertices in the set and "Time" is the total time of
the other steps of computing betweenness scores of all the vertices in the set.
Therefore, the total running time of \textsf{BCD} for a given dataset and a given set
is the sum of "Time" and "$\text{Time}_{\mathcal{RV}}$".
Comparing the results presented in Table~\ref{table:setexp} with the results presented
in Table~\ref{table:randomexp2} reveals that for estimating betweenness scores of a set of vertices,
\textsf{BCD} considerably outperforms \textsf{KADABRA} (where $\epsilon$ is $0.005$).
While in most cases the total running time of \textsf{BCD} is less than
the running time of \textsf{KADABRA} (even when the size of the set is 15),
\textsf{BCD} gives much more accurate results.
Note that even when in \textsf{KADABRA} $\epsilon$ is set to 0.01,
in many cases \textsf{BCD} is faster than \textsf{KADABRA}.
In particular, over datasets such as {\em amazon}, {\em com-amazon},
{\em email-EuAll} and {\em web-NotreDame},
even for the sets of size 15, \textsf{BCD} is faster than \textsf{KADABRA}
and it always produces much more accurate results.

\subsection{Discussion}

Our extensive experiments reveal that \textsf{BCD} usually significantly outperforms \textsf{KADABRA}.
This is due to the huge pruning that $\mathcal{RV}$ applies to the set of source vertices
that are used to form SPDs and compute dependency scores.
Note that in all the cases, $\mathcal{RV}$ is computed very efficiently, hence,
it does not impose a considerable load on the algorithm.
In the case of estimating betweenness score of the vertex with the highest betweenness score,
over two datasets we may argue that \textsf{KADABRA} outperforms \textsf{BCD}.
This has two reasons. On the one hand, in these cases the ratio $\frac{\mathcal{RV}(r)}{|V(G)|}$ is large,
as a result, many SPDs are computed by \textsf{BCD}.
On the other hand, the SPDs contain many vertices of the graph, as a result,
their computation is expensive.

\begin{figure}[!h]
\vspace*{-2mm}
\centering
\subfigure[]
{
\includegraphics[scale=0.5]{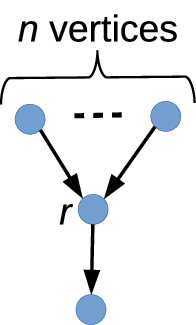}
\label{fig:discussion1}
}\qquad\qquad
\subfigure[]
{
\includegraphics[scale=0.5]{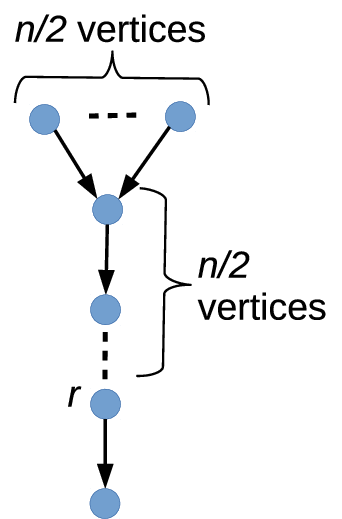}
\label{fig:discussion2}
}\vspace*{-2mm}
\caption
{
\label{fig:discussion}
Using \textsf{BCD},
in the graph of Figure~\ref{fig:discussion1}, $BC(r)$ is computed in $O(n)$ time;
whereas in the graph of Figure~\ref{fig:discussion2}, $BC(r)$ is computed in $O(n^2)$ time.
}
\end{figure}

In the end,
it is worth mentioning that while the size of $\mathcal{RV}$ is an important factor
on the efficiency of our algorithm, it is not the sole factor.
For example, both graphs of Figure~\ref{fig:discussion} have $n+2$ vertices,
the size of $\mathcal{RV}(r)$ in Figure~\ref{fig:discussion1} is $n$ and
the size of $\mathcal{RV}(r)$ in Figure~\ref{fig:discussion2} is $\frac{n}{2}$.
However, in Figure~\ref{fig:discussion1} each SPD is computed and processed in $O(1)$ time,
whereas in Figure~\ref{fig:discussion2} each SPD is computed and processed in $O(n)$ time.
Therefore, while in Figure~\ref{fig:discussion1} $BC(r)$ is computed in $O(n)$ time,
in Figure~\ref{fig:discussion2} it is computed in $O(n^2)$ time.

\section{Conclusion}
\label{sec:conclusion}

In this paper,
we studied the problem of computing betweenness score in large directed graphs.
First, given a directed network $G$ and a vertex $r \in V(G)$,
we proposed an exact algorithm to compute betweenness score of $r$.
Our algorithm first computes a set $\mathcal{RV}(r)$,
which is used to prune a huge amount of computations that do not contribute to the betweenness score of $r$.
Time complexity of our exact algorithm  is
respectively
$\Theta(|\mathcal{RV}(r)|\cdot|E(G)|)$ and
$\Theta(|\mathcal{RV}(r)|\cdot|E(G)|+|\mathcal{RV}(r)|\cdot|V(G)|\log |V(G)|)$
for unweighted graphs and weighted graphs with positive weights.
Then, for the cases where $\mathcal{RV}(r)$ is large,
we presented a simple randomized algorithm
that samples from $\mathcal{RV}(r)$ and performs computations for only
the sampled elements.
Finally, we performed extensive experiments over several real-world datasets from different domains
for several randomly chosen vertices as well as for the vertices
with the highest betweenness scores.
Our experiments revealed that for estimating betweenness score of a single vertex,
our algorithm considerably outperforms the most efficient existing randomized algorithms, in terms of both running time and accuracy.
They also showed that
our algorithm
improves the existing algorithms when someone
is interested in computing betweenness values of the vertices in a set
whose cardinality is very small ($15$ for the analyzed graphs).


\subsection*{Acknowledgement}
This work has been supported in part by the ANR project IDOLE.



\end{document}